 \newtheorem{prop}{Proposition}
\newtheorem{corol}{Corrolary}
\newenvironment{proof}[1][Proof~~--]{\begin{trivlist}
  \item[\hskip \labelsep {\bfseries #1}]}{\end{trivlist}}
\def\cro#1{\left[#1\right]}
\def\Exp#1{\exp\cro{#1}}
\newsavebox{\fminibox}
\newlength{\fminilength}
 \def\T{^\tD} \def\+{^\dagger}
\def\nequiv{\not\kern-.05em\equiv}
\def\egal{\kern-.5em=\kern-.5em}        
\def\propt{\kern-.2em\propto\kern-.2em} 
\def\intdouble{\int\kern-0.3em\int}
\def\inttriple{\int\kern-0.3em\int\kern-0.3em\int}
\def\rond#1{\overset{\kern-0.33em~_\circ}{#1}}
\def\rondit[#1]#2{\overset{\kern#1~_\circ}{#2}}
\def\edoc{ 

\def\Gn{\gamma_n} \def\Gx{\gamma_x} \def\xbh{\widehat \xb}
\def\epsb{{\ensuremath{\varepsilonb}}}
\def\Rbmun{{\ensuremath{\Rb_n^{-1}}}} \def\Rx{{\ensuremath{\Rb_x}}}
\def\Rxmun{{\ensuremath{\Rb_x^{-1}}}}
\def\Rxapost{{\ensuremath{\Rb_x^{\rm post}}}}
\def\mx{{\ensuremath{\mb_x}}}
\def\mxapost{{\ensuremath{\mb_x^{\rm post}}}}
\def\xbt{{\ensuremath{\widetilde{\xb}}}}
\def\mubt{{\ensuremath{\widetilde{\mub}}}}

\definecolor{cacadoa}{rgb}{0.5,0.5,0} \def\Vert#1{{\color{green} #1}}
\def\FORQ#1{{\color{blue} #1}} \def\Joli#1{{\color{cacadoa} #1}}
\def\Joli#1{{\color{magenta} #1}}
\def\Gio#1{{\color{red}#1}}
\def\GioSL#1{\textsl{\small\color{red} (#1)}}
\newcommand{\KK}[1]{{\st{#1}}}

\begin{document}

\title{Efficient sampling of high-dimensional Gaussian fields: the
  non-stationary / non-sparse case}

\author{F. Orieux$^{*}$, O. F\'eron and J.-F. Giovannelli\thanks{F.
    Orieux is with Pasteur Institute, 25 rue du Dr Roux, 75015
    Paris, France, \texttt{orieux@pasteur.fr}. O. F\'eron is with {EDF Research \&
      Developments}, Dpt OSIRIS, 92140 Clamart, France,
    \texttt{olivier-2.feron@edf.fr}. J.-F. Giovannelli is with the
    Laboratoire de l'Int\'egration du Mat\'eriau au Syst\`eme, 33405
    Talence, France, \texttt{Giova@IMS-Bordeaux.fr}.}}

\maketitle

\begin{abstract}
  This paper is devoted to the problem of sampling Gaussian fields in high dimension. Solutions exist for two specific structures of inverse covariance : sparse and circulant. The proposed approach is valid in a more general case and especially as it emerges in inverse problems. It relies on a perturbation-optimization principle: adequate stochastic perturbation of a criterion and optimization of the perturbed criterion. It is shown that the criterion minimizer is a sample of the target density. The motivation in inverse problems is related to general (non-convolutive) linear observation models and their resolution in a Bayesian framework implemented through sampling algorithms when existing samplers are not feasible. It finds a direct application in myopic and/or unsupervised inversion as well as in some non-Gaussian inversion. An illustration focused on hyperparameter estimation for super-resolution problems assesses the effectiveness of the proposed approach.
\end{abstract}



\section{Introduction}\label{sec_intro}

This work deals with simulation of high-dimensional Gaussian and
conditional Gaussian fields. The problem difficulty is directly
related to handling high-dimensional covariances $\Rb$ and precision
matrices $\Qb=\Rb^{-1}$.  In\-ver\-sion and factorization of these
matrices can be very costly in terms of time and memory, if not
impossible. General tools~\cite{Robert04,Gilks96} provide
pixel-by-pixel sequential Gibbs or Hastings-Metropolis algorithms but
they are not practicable in high dimension. This problem is old and
solutions exist in two cases.
%
%
%
%
\begin{itemize}


\item When $\Qb$ is sparse, two strategies have been proposed. The
  first one~\cite[chap. 8]{Winkler03}, relies on a parallel Gibbs
  sampler based on a chessboard-like decomposition. It takes advantage
  of the sparsity of $\Qb$ to allow large blocks of variables to be
  simultaneously updated. The second strategy~\cite{Rue01,Lalanne01}
  relies on a Cholesky decomposition $\Qb=\Lb\T\Lb$: a sample $\xb$
  can be obtained by solving the linear system $\Lb \xb = \epsb$,
  where $\epsb$ is a zero-mean white Gaussian vector. The sparsity of
  $\Lb$ ensures feasible numerical resolution of the linear system.


\item In \cite{Chellappa85,Chellappa92} the authors pointed out an
  efficient solution for the case of circulant matrix $\Qb$, even
  non-sparse. In this case, the covariance is diagonal in the Fourier
  domain: the sampling is based on independent sampling of the Fourier
  coefficients. Finally, the sampling is efficiently computed by FFT
  and it has been used in~\cite{Geman95,Giovannelli08,Orieux10}.
\end{itemize}
To our knowledge there is no solution for sampling more general
high-dimensional Gaussian fields. In this paper we propose an
efficient algorithm for a more general case where $\Qb$ is non-sparse,
non-circulant and very large.  The proposed approach is applicable to
any precision matrix of the form
\begin{equation}
  \label{eq_precision_form}
  \Qb=\sum_{k=1}^K \Mb_k\T\Rb_k^{-1} \Mb_k
\end{equation}
for which, to the best of our knowledge, no practical solution exists.
A recent paper~\cite{Tan10} briefly describes a similar algorithm
  for a compress sensing problem in signal processing. Our paper deepens
and generalizes this contribution.

The problem of sampling such fields is commonly encountered in
Bayesian approaches for inverse problems and especially in high
dimension like in image reconstruction. Indeed, let us consider the
general linear forward model
\begin{equation}
  \label{eq_mod_dir}
  \yb=\Hb \xb + \nb,
\end{equation}
where $\yb$, $\nb$ and $\xb$ denote the observations, the noise and
the unknown image and $\Hb$ is a linear operator. Consider, again, two
prior densities for $\nb$ and $\xb$ that are Gaussian conditionally to
a set of parameters $\thetab$ and focus on the joint estimation of
$\xb$ and $\thetab$ from the posterior density $p(\xb,\thetab|\yb)$.
This framework is very general and can be used in many applications.
In image reconstruction, it covers a majority of current problems such
as unsupervised \cite{Giovannelli08} or myopic \cite{Orieux10}
inversion, since acquisition (or instrument) parameters and
hyperparameters can be included in $\thetab$. The framework also
covers some non-Gaussian priors involving auxiliary/hidden
variables~\cite{Geman84,Geman95,Giovannelli08,Feron07,Ayasso10}
(location mixture or scale mixture of Gaussian), by including these
variables in $\thetab$.

The joint estimation of $\xb$ and $\thetab$ from the posterior density
$p(\xb,\thetab|\yb)$ commonly requires the handling of the posterior
conditional probability $p(\xb|\thetab,\yb)$. Under the above
assumption, this density is Gaussian with precision matrix $\Qb$ of
the form \eqref{eq_precision_form}, as shown in
section~\ref{sec_prob_inv}. The capability to sample from this density
makes it possible to propose, for instance, stochastic optimization
\cite{Geman84} or Gibbs sampler \cite{Feron07,Orieux10}. In the
general case of inverse problems, $\Qb$ is neither sparse nor
circulant so existing sampling methods fail whereas the proposed
sampling method is effective.



Subsequently, section~\ref{sec_pert_opt} presents the proposed
algorithm and its direct application to general inverse
problems. Section~\ref{sec_illustr} illustrates the algorithm through
an academic inverse problem in super-resolution imaging.  Section
\ref{sec_conclu} concludes and presents some perspectives.

\section{Perturbation-optimization algorithm}
\label{sec_pert_opt}

\subsection{Description}

Here we focus on the problem of sampling from a target Gaussian
density $\Nc(0,\Qb^{-1})$ where $\Qb$ is in the form
\eqref{eq_precision_form}. When $\Qb$ is neither sparse nor circulant,
existing methods fail in high dimension and we propose an efficient
solution based on the Perturbation-Optimization (PO) algorithm
described by Algorithm~\ref{algo_PO} and
Proposition~\ref{proposition_PO}.

\begin{algorithm}[htbp]
  \caption{: Perturbation-Optimization algorithm.\label{algo_PO}}
  \begin{algorithmic}[1]
    \State \textbf{Step P (Perturbation)}: Generate independent
    Gaussian variables $\etab_k,~k=1,\dots,K$ following
    \begin{equation}
      \label{eq_loi_eta}
      \etab_k \sim \Nc(0,\Rb_k), \quad \forall k=1,\dots K
    \end{equation}
    \State \textbf{Step O (Optimization)}: Compute $\xbh$ as the
    minimizer of the criterion
    \begin{equation}\label{eq_critere_general}
      J(\xb|\etab_1,\dots,\etab_K)=\sum_{k}^K \left(\etab_k - \Mb_k \xb
      \right)\T \Rb_k^{-1}\left(\etab_k - \Mb_k \xb \right)
    \end{equation}
  \end{algorithmic}
\end{algorithm}

\begin{prop}
  \label{proposition_PO}
  The minimizer $\xbh$ of criterion~\eqref{eq_critere_general}
  resulting from Algorithm~\ref{algo_PO} is Gaussian
  \begin{equation}
    \label{eq:6}
    \xbh \sim \Nc(0,\Qb^{-1})\,.
  \end{equation}
\end{prop}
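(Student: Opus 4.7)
The plan is to exhibit $\xbh$ explicitly as a linear function of the Gaussian perturbations $\etab_1,\dots,\etab_K$, and then read off its distribution.

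First, I would compute the minimizer by setting $\nabla_{\xb} J = 0$. Since $J$ is quadratic with Hessian $2\sum_k \Mb_k\T \Rb_k^{-1} \Mb_k = 2\Qb$ (assumed positive definite so that $\Qb^{-1}$ exists and the minimum is unique), the normal equations read
\begin{equation}
  \Qb\, \xbh = \sum_{k=1}^{K} \Mb_k\T \Rb_k^{-1} \etab_k,
\end{equation}
so that $\xbh = \Qb^{-1} \sum_k \Mb_k\T \Rb_k^{-1} \etab_k$.

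Next, since $\xbh$ is a linear combination of independent Gaussian vectors, it is itself Gaussian; it then suffices to identify its first two moments. The mean is immediate: $\E[\xbh] = \Qb^{-1} \sum_k \Mb_k\T \Rb_k^{-1} \E[\etab_k] = 0$. For the covariance, using the independence of the $\etab_k$ and $\Cov(\etab_k) = \Rb_k$,
\begin{equation}
  \Cov\!\left[\sum_{k} \Mb_k\T \Rb_k^{-1} \etab_k\right]
  = \sum_{k} \Mb_k\T \Rb_k^{-1} \Rb_k \Rb_k^{-1} \Mb_k
  = \sum_{k} \Mb_k\T \Rb_k^{-1} \Mb_k = \Qb,
\end{equation}
hence $\Cov[\xbh] = \Qb^{-1} \Qb \Qb^{-1} = \Qb^{-1}$, which establishes $\xbh \sim \Nc(0,\Qb^{-1})$.

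There is no real obstacle in the argument beyond the implicit assumption that $\Qb$ is invertible; the only mildly delicate point is the algebraic cancellation in the covariance calculation, where the choice of weighting the residuals by $\Rb_k^{-1}$ (rather than any other positive definite matrix) is precisely what makes the middle factor collapse to $\Qb$. I would flag this as the key design choice behind the algorithm: the perturbation covariance of $\etab_k$ must match the weight $\Rb_k^{-1}$ appearing in $J$ so that the sandwich $\Qb^{-1}(\cdot)\Qb^{-1}$ reduces to $\Qb^{-1}$.
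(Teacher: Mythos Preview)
Your argument is correct and matches the paper's own proof almost line for line: write $\xbh$ explicitly as $\Qb^{-1}\sum_k \Mb_k\T \Rb_k^{-1}\etab_k$, observe it is Gaussian with zero mean, and compute the covariance using independence so that the inner sum collapses to $\Qb$ and the sandwich yields $\Qb^{-1}$. Your closing remark about the matching of the perturbation covariance to the criterion weight is a helpful gloss but not an additional proof ingredient.
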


\begin{proof}
  The minimizer $\xbh$ of criterion~\eqref{eq_critere_general} has an
  analytical expression:
  \begin{equation}
    \label{eq_sol_xbh}
    \begin{split}
      \xbh & = \left[\sum_{k=1}^K \Mb_k\T\Rb_k^{-1} \Mb_k\right]^{-1}
      \left( \sum_{k=1}^K \Mb_k\T\Rb_k^{-1} \etab_k \right) \\
      & = \Qb^{-1}\left( \sum_{k=1}^K \Mb_k\T\Rb_k^{-1} \etab_k
      \right) \,.
    \end{split}
  \end{equation}
  It is clearly a zero-mean Gaussian vector as a linear combination of
  $K$ zero-mean Gaussian vectors. The covariance is calculated below
  using elementary algebra: from \eqref{eq_loi_eta} and
  \eqref{eq_sol_xbh}, we have
  \begin{align*}
    \Vbb [\xbh] 
    & = \Qb^{-1} \Big[ \sum_{k,k'=1}^K \Mb_k\T \Rb_k^{-1}
    \Ebb\left[\etab_k \etab_{k'}\T\right] \Rb_{k'}^{-1} \Mb_{k'} \Big]
    \Qb^{-1} \\
    & = \Qb^{-1} \Big[ \sum_{k=1}^K \Mb_k\T \Rb_k^{-1}
    \Ebb\left[\etab_k \etab_k\T\right] \Rb_k^{-1} \Mb_k \Big] \Qb^{-1} \\
    & = \Qb^{-1} \Big[ \sum_{k=1}^K \Mb_k\T \Rb_k^{-1}  \Mb_k \Big] \Qb^{-1} = \Qb^{-1}
  \end{align*}
  that completes the proof.
\end{proof}


The criterion $J(\xb|\etab_1,\dots,\etab_K)$ being quadratic, we have
access to the whole available literature on efficient numerical
optimization tools, e.g. iterative techniques such as gradient based
ones (standard, corrected, conjugate, optimal step size\dots). We have
to highlight that in theory the sample of the target density is the
exact optimum of the perturbed criterion. Therefore the optimization
step may require as much descent steps as the dimension of the
problem. However, the optimization procedure can be stopped more
rapidly without practical loss of efficiency.

Obviously, the efficiency of the algorithm depends on the capability
to easily sample from Gaussian densities $\Nc(0,\Rb_k)$. This will be
actually the case in inverse problem applications as shown in
section~\ref{sec_prob_inv}.

\begin{algorithm}[htbp]
  \caption{: Perturbation-Optimization algorithm.}
  \label{algo_PO_mean}
  \begin{algorithmic}[1]
    \State \textbf{Step P (Perturbation)}: Generate independent
    Gaussian variables $\zetab_k,~k=1,\dots,K$ following
    \begin{equation}
      \label{eq_loi_mu}
      \zetab_k \sim \Nc(\mb_k,\Rb_k), \quad \forall k=1,\dots K
    \end{equation}
    \State \textbf{Step O (Optimization)}: Compute $\xbt$ as the
    minimizer of the criterion
    \begin{equation*}\label{eq_critere_mean}
      J(\xb|\zetab_1,\dots,\zetab_K)=\sum_{k=1}^K \left(\zetab_k - \Mb_k
        \xb \right)\T \Rb_k^{-1}\left(\zetab_k - \Mb_k \xb \right)
    \end{equation*}
  \end{algorithmic}
\end{algorithm}

Moreover, we can actually extend Proposition~\ref{proposition_PO} and
Algorithm~\ref{algo_PO} when the mean of the target Gaussian density
is not zero, by proposing Algorithm~\ref{algo_PO_mean} above and
Corrolary~\ref{corrolaire_PO_mean} below.

\begin{corol}
  \label{corrolaire_PO_mean}
  The solution $\xbt$ resulting from Algorithm~\ref{algo_PO_mean} is
  Gaussian
  \begin{equation}
    \label{eq_cond_gauss}
    \xbt \sim \Nc\left(\Qb^{-1} \left( \sum_{k=1}^K \Mb_k\T
        \Rb_k^{-1}\mb_k \right),\Qb^{-1}  \right)\;.
  \end{equation}
\end{corol}

\begin{proof}
  Consider $\etab_k = \zetab_k-\mb_k$, $k=1,\dots,K$, and the
  minimizer $\xbh$ of the criterion \eqref{eq_critere_general}. Hence
  it is trivial to show that $\xbt = \xbh + \Qb^{-1}\left( \sum_{k=1}^K
    \Mb_k\T \Rb_k^{-1}\mb_k \right) $.  Using the results of
  Proposition~\ref{proposition_PO} on $\xbh$, we can show
  \begin{align*}
    \Ebb\left[\xbt \right] & = \Qb^{-1} \left( \sum_{k=1}^K \Mb_k\T \Rb_k^{-1}\mb_k \right)  \\
    \Vbb \left[ \xbt\right] & = \Vbb \left[ \xbh\right] = \Qb^{-1}
  \end{align*}
  and that completes the proof.
\end{proof}


\subsection{Application to inverse problems}\label{sec_prob_inv}

The purpose is to solve an inverse problem, stated by the forward
model~\eqref{eq_mod_dir}, in a Bayesian framework based on the
following models:
\begin{itemize}
\item $\Hb$ describes an observation system that can depend on unknown
  acquisition parameters,

\item prior densities for the observation noise and for the object are
  Gaussian $\nb \sim \Nc(\mb_n,\Rb_n)$ and $\xb \sim \Nc(\mx,\Rx)$,
  conditionally on a set of auxiliary variables.

\end{itemize}
In a general statement, $\thetab$ collects acquisition parameters,
hyperparameters and auxiliary variables. This framework covers myopic
(semi-blind) and unsupervised inversion, non-stationary or
inhomogeneous Gaussian priors and non-Gaussian priors involving
auxiliary variables.

The general inversion problem then consists in estimating $\xb$ and
$\thetab$ through the density $p(\xb, \thetab|\yb)$. The posterior
mean can be approximated using a Gibbs sampler. It is
an iterative algorithm which alternately samples from
$p(\thetab|\xb,\yb)$ and $p(\xb|\thetab,\yb)$. The conditional
posterior $p(\xb|\yb,\thetab)$ is a correlated Gaussian field: $\xb
\sim \Nc (\mxapost,\Rxapost)$ with
\begin{align*}
  \Rxapost & = \left(\Hb^t \Rbmun \Hb + \Rxmun \right)^{-1} \\
  \mxapost & = \Rxapost \left( \Hb^t \Rbmun \left[\yb-\mb_n \right] +
    \Rxmun \mx \right)
\end{align*}
where $\thetab$ is embedded in $\Hb, \Rb_n$ and $\Rx$ for simpler
notations.

If $\Hb$ has no particular properties then the precision matrix
$\Qb=(\Rxapost)^{-1}$ is neither sparse nor circulant, and existing
sampling methods are not applicable. The Perturbation-Optimization
algorithm makes it possible to efficiently sample from
$\Nc(\mxapost,\Rxapost)$. In particular, applying
Algorithm~\ref{algo_PO_mean} with $K=2$, $\Mb_1=\Hb$, $\Mb_2=\Ib$,
$\Rb_1=\Rb_n$, $\Rb_2=\Rx$, $\mb_1=\mb_n$ and $\mb_2=\mb_x$, directly
gives a sample from this density. Then, this algorithm ensures that
correct posterior mean and covariance are obtained, at the same
time. This increases the usefulness of this method for inverse
problems.


\section{Illustration}
\label{sec_illustr}

The proposed PO algorithm is an effective tool for high dimensional
inverse problems, e.g. image reconstruction. In this context, it opens
up the possibility to resort to stochastic sampling algorithms (MCMC,
Gibbs, Metropolis-Hastings,\dots) providing two main advantages:
\begin{itemize}
\item the capability to jointly estimate several unknowns when the
  global modelization is more natural through conditional
  distributions (hierarchical structure),

\item in addition, the access to the entire distribution of the
  unknowns providing uncertainties (standard deviations, confidence
  intervals,\dots).
\end{itemize}
\vspace{-0.5cm}

\subsection{Two examples: electromagnetics and fluorescent microscopy}

For example, the proposed PO algorithm has been applied to an
electromagnetic inverse scattering problem by one of the
authors~\cite{Feron07}. In a domain integral representation, the
forward model expresses observed data as a bi-linear function of
unknown object and unknown induced current. The bi-linear structure
leads to a modelization with conditional Gaussians: the prior for the
induced current is Gaussian given the object, and the prior for the
object is Gaussian given the induced current. The joint estimation of
the object and the current is tackled in a Bayesian framework and
computed by means of a Gibbs sampler in which the sampling of the
current is made possible thanks to the proposed PO algorithm.

In \cite{orieux11} it has been applied by another one of the authors
to process data in biology imaging to achieve super-resolution in
fluorescent microscopy trough Structured Illumination. The problem is
also tackled in a Bayesian framework and implemented by means of a
Gibbs sampler. The density for the object given the other variables is
Gaussian with non-invariant covariance (due to non-invariant
illumination of the biological sample) making the use of existing
techniques impossible. Again, the proposed PO algorithm overcomes this
difficulty and results in the capability to estimate hyperparameters
and acquisition parameters, while also providing uncertainties.

\subsection{Unsupervised super-resolution}

In the following, we detail an application of the proposed PO
algorithm to the super-resolution (SR) academic problem: several
blurred and down-sampled (low resolution) images of a scene are
available in order to retrieve the original (high resolution) scene
\cite{Park03,Rochefort06}. It is shown that the crucial novelty,
enabled by the proposed PO algorithm, is to allow the use of sampling
algorithms in SR methods and to provide joint image and
hyperparameters estimation including uncertainties.

\begin{figure}[htbp]
  \centering

  \subfigure[$\Gn$ chain]{%
    \includegraphics[width=0.45\textwidth]{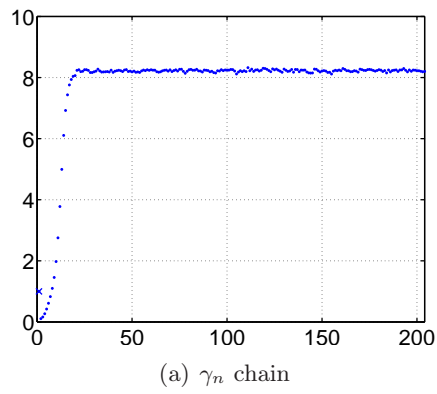}\label{fig:gNchain}}%
  ~~~~~
  \subfigure[$\Gn$ histogram]{%
    \includegraphics[width=0.45\textwidth]{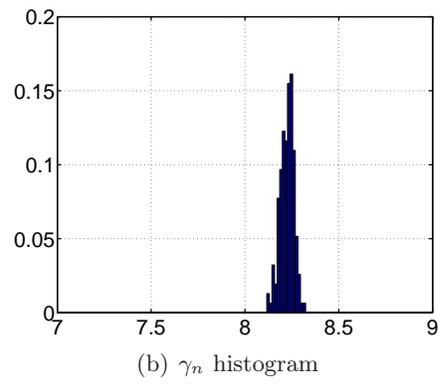}\label{fig:gNhist}}

  \subfigure[$\Gx$ chain]{%
    \includegraphics[width=0.45\textwidth]{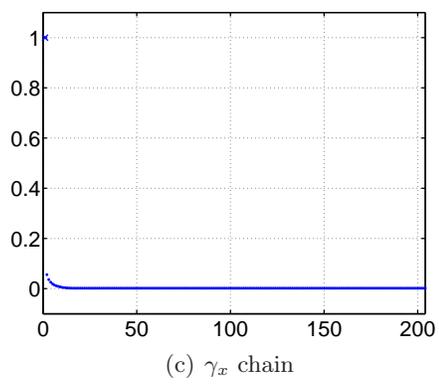} \label{fig:gXchain}}%
  ~~~~~
  \subfigure[$\Gx$ histogram]{%
    \includegraphics[width=0.45\textwidth]{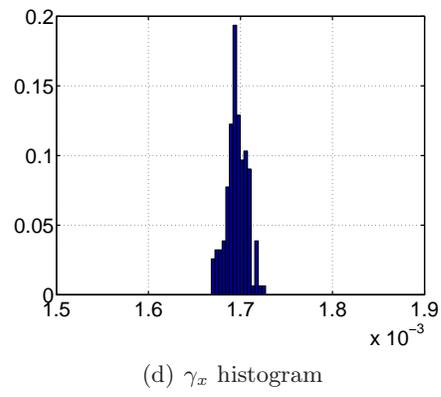} \label{fig:gXhist}}


  \caption{Chains and histograms  of hyperparameters
    $\Gn$   and $\Gx$.\label{fig:results}}

\end{figure}

\begin{figure*}[htbp]
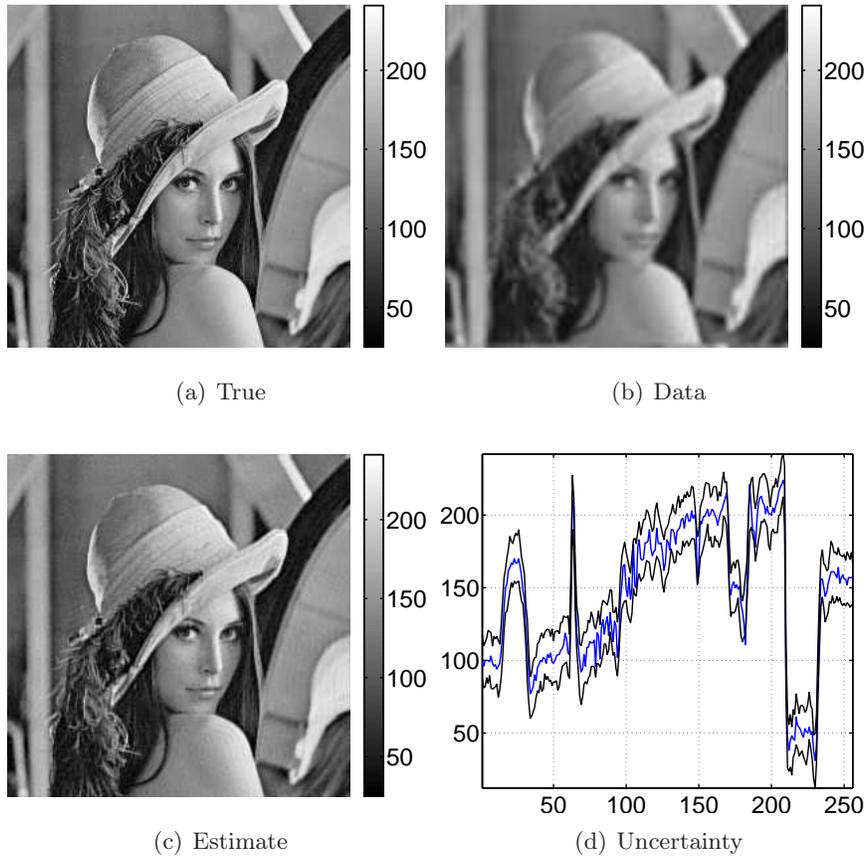

  \centering

  \subfigure[True]{\includegraphics[width=0.45\textwidth]{./Fig/lena} \label{fig:true}}%
  \subfigure[Data]{\includegraphics[width=0.45\textwidth]{./Fig/data} \label{fig:data}}

  \subfigure[Estimate]{\includegraphics[width=0.45\textwidth]{./Fig/eap} \label{fig:eap}}%
  \subfigure[Uncertainty]{\includegraphics[width=0.45\textwidth]{./Fig/stdLigne} \label{fig:var}}

  \caption{Image reconstruction: true image~\ref{fig:true}, one of the
    low resolution images~\ref{fig:data} and the proposed
    estimate~\ref{fig:eap}. The plot \ref{fig:var} is a true image
    slice inside the 99\% confidence interval around the estimate.}

  \label{fig:imSR}
\end{figure*}

The usual forward model writes $\yb = \Hb\xb + \nb = \Pb \Cb\xb +
\nb$, where $\yb \in \eR^M$ collects the low resolution images (5
images of $128 \times 128$ pixels), $\xb \in \eR^N$ is the original
image ($256 \times 256$ pixels), $\nb$ is the noise, $\Cb$ and $\Pb$
are circulant convolution and decimation matrices. The prior density
for $\nb$ is $\mathcal{N}(\mathbf{0},\Gn^{-1}\Ib)$ and the one for
$\xb$ is $\mathcal{N}(\mathbf{0},\Gx^{ -1} \Db^t\Db)$ where $\Db$ is
the Laplacian operator. The hyperparameters $\Gn$ and $\Gx$ are
\textit{unknown} and their prior law are Jeffreys'.  The posterior
density is then
\begin{multline}
  \label{eq:1}
  p(\xb,\Gn,\Gx|\yb) ~~\propto~~ \Gn^{M/2 - 1} \Gx^{(N-1)/2 - 1} \\
  \Exp{-\frac{\Gn}{2}\|\yb -\Pb\Cb\xb\|^2 -\frac{\Gx}{2}\|\Db\xb\|^2}.
\end{multline}
It is explored by a Gibbs sampler:
iteratively sampling $\Gn$, $\Gx$ and $\xb$ under their respective
conditional probabilities
\begin{align*}
  p(\Gn^{(k)}|\xb,\Gx,\yb) & = \mathcal{G}\left(1 + M/2, 2/\left\|\yb
      - \Pb\Cb\xb^{(k-1)}\right\|^2\right) \\
  p(\Gx^{(k)}|\xb,\Gn,\yb) & = \mathcal{G}\left(1 + (N-1)/2,
    2/\left\|\Db \xb^{(k-1)}\right\|^2\right)  \\
  p(\xb^{(k)}|\Gx,\Gn,\yb) & = \mathcal{N}(\mxapost, \Rxapost)
\end{align*}
with
\begin{align*}
  \Rxapost & = \left(\Gn^{(k)} \Cb^t\Pb^t\Pb\Cb + \Gx^{(k)} \Db^t\Db
  \right)^{-1} \\
  \mxapost & = \Gn^{(k)}\Rxapost\Pb^t\Cb^t \yb.
\end{align*}
The conditional posteriors for the hyperparameters are Gamma laws and
consequently, easy to sample.

The conditional posterior for $\xb$ is Gaussian, but existing sampling
approaches are not operational due to the structure of the covariance
$\Rxapost$, as explained in Section~\ref{sec_prob_inv} with
$\Hb=\Pb\Cb$: $\Hb$ is non-circulant due to the decimation and $\Hb$
is not sparse especially in the case of large support. In this case,
the PO algorithm~2 directly provides a desired sample (with both
correct mean and correct covariance).

\begin{figure}[htbp]
  \centering
\end{figure}

It is important to keep in mind that the proposed PO algorithm does
not improve image quality itself (w.r.t. other SR methods) but the
crucial novelty is to allow for hyperparameter estimation. In this
sense, Fig.~\ref{fig:results} shows the hyperparameter iterates (that
illustrate the operation and convergence) and histograms (that
approximate marginal posteriors) ; the posterior means are $\widehat
\Gn \approx 8$ and $\widehat \Gx \approx 2\times 10^{-3}$. Concerning
the images themselves, results are shown in Fig.~\ref{fig:imSR}:
estimated image in~\ref{fig:eap} clearly shows a better resolution
than data in Fig.~\ref{fig:data} and it is visually close to the
original image of~\ref{fig:true}. It is then clear that the approach
produces correct hyperparameters i.e. correct balance between data and
prior. Moreover, uncertainties are derived from the samples through
the posterior standard deviation. It is illustrated in
Fig.~\ref{fig:var} which shows that the true image is inside the 99\%
confidence interval around the estimate. As a conclusion, the proposed
PO algorithm makes it possible to include sampling algorithms in SR
method whereas it was not possible before. It enables to provide joint
image and hyperparameters estimation as well as uncertainties
computations.




\section{Conclusion}
\label{sec_conclu}

This paper presents a novel approach for sampling high-dimensional
Gaussian fields when usual approaches are ineffective. A sample of the
target density is produced as the minimizer of a precisely designed
quadratic criterion. It relies on a perturbation-optimization
principle: adequate stochastic perturbation of a criterion and
optimization of the perturbed criterion. It is shown that the
criterion minimizer is a sample of the target density. The approach is
applicable as soon as a particular factorization of the precision
matrix is available, and it is usually the case in inverse
problems. There is a wide class of applications, in particular any
data processing problem based on a linear forward model and
conditional Gaussian prior for noise and object. The effectiveness of
the proposed algorithm has been illustrated in \cite{orieux11,Feron07}
and in this paper on a more academic super-resolution imaging problem
allowing automatic tuning of hyperparameters.


\section{Acknowledgment}
\label{sec:acknowledgment}

The authors would like to thank \JI (IRCyN) for inspiration of this
work \cite{IdierUnknown}, \TR and \AMD (L2S), for fruitful
discussions, and Cornelia \textsc{Vacar} (IMS) for carefully reading the paper.

\bibliographystyle{IEEEtran}

\bibliography{biben,revuedef,revueabr,biblio}

\end{document}